


\documentclass[letterpaper,twocolumn,10pt]{article}
\usepackage{usenix,epsfig,endnotes,xspace}
\usepackage{amsthm,amsmath,amssymb}
\usepackage{epigraph}
\usepackage[draft]{todonotes}
\usepackage{algpseudocode}
\usepackage{graphicx}
\usepackage{subcaption}
\usepackage{url}
\newcommand{\sysname}{SybilQuorum\xspace}
\newcommand{\sysnamec}{\sysname-core\xspace}
\newcommand{\sysnameh}{\sysname-hybrid\xspace}

\newcommand{\para}[1]{\vspace{3mm} \noindent {\bf #1.}}

\newtheorem{theorem}{Security Theorem}
\newtheorem{definition}{Definition}
\newtheorem{lemma}{Lemma}


\begin{document}

\date{}

\title{\Large \bf \sysname: Open Distributed Ledgers Through Trust Networks \\ (Extended Abstract for the Stanford Blockchain Conference)}

\author{
{\rm Alberto Sonnino\thanks{This work was done while the authors where at \texttt{chainspace.io}.}}\\
University College London
\and
{\rm George Danezis\footnotemark[1]}\\
University College London
}

\maketitle


\subsection*{Abstract}

The Sybil attack plagues all peer-to-peer systems, and modern open distributed ledgers employ a number of tactics to prevent it from proof of work, or other resources such as space, stake or memory, to traditional admission control in permissioned settings. With \sysname we propose an alternative approach to securing an open distributed ledger against Sybil attacks, and ensuring consensus amongst honest participants, leveraging social network based Sybil defences. We show how nodes expressing their trust relationships through the ledger can bootstrap and operate a value system, and general transaction system, and how Sybil attacks are thwarted. We empirically evaluate our system as a secure Federated Byzantine Agreement System, and extend the theory of those systems to do so.

\section{Introduction}



Distributed ledgers, and blockchains, as they are sometimes called, provide peer-to-peer open transaction systems used for alternative currencies, such as Bitcoin~\cite{nakamoto2008bitcoin}, or general distributed execution of code, often called `smart contracts'~\cite{wood2014ethereum}. The main innovation of Nakamoto consensus~\cite{nakamoto2008bitcoin}, underpinning both systems, is the open nature of the system that allows infrastructure nodes to come-and-go, and participate on the basis of solving proof-of-work cryptographic puzzles. However, this is computationally expensive and resource intensive. 

Alternatives based on proof-of-stake~\cite{dai1998b} do not consume resources, but require nodes to lock some `stake' in a native crypto-currency, to participate and slash this stake upon detecting misbehaviour. However, there are valid concerns around such systems: locked stake represents a loss of opportunity. The consensus favors `richer' nodes, that as a result get richer, which may in turn threaten decentralization and may lead to attacks by minority players with wealth. Finally, the values it embeds relate to `boardroom democracy' (as Bryan Ford suggests), and may not be aligned with principles of openness and equity.

In this work we introduce a new consensus mechanism, \sysname, that allows peers to establish a distributed ledger without the need for either proof-of-work, or other physical resources, or proof-of-stake to eliminate Sybil attacks. The system is open to new members, and permissionless, making it competitive with Nakamoto consensus. It is based on an established line of work related to Sybil defences leveraging Social Networks, starting with protecting Distributed Hash Tables~\cite{DBLP:conf/esorics/DanezisLKA05}, and pursued by SybilGuard~\cite{DBLP:journals/ton/YuKGF08}, SybilLimit~\cite{DBLP:journals/ton/YuGKX10}, and SybilInfer~\cite{DBLP:conf/ndss/DanezisM09}. We also present a hybrid system, \sysnameh, that combines stake and social networks to further strengthen Sybil resistance.

Our contributions include: 
\begin{enumerate}
    \item A proposal for achieving open consensus backed by social links, embodied in the \sysnamec design.
    \item Extensions to integrate aspects of proof-of-stake to enhance Sybil defences, and prevent wealthy nodes from controlling the consensus, namely \sysnameh. 
    \item An extension to the theory of Federated Byzantine Agreement Systems (FBAS), and efficient algorithms based on this theory to test for their safety and liveness. 
    \item A concrete design, including integration with specific consensus mechanisms compatible with \sysname at a systems level.
    \item An evaluation of the strength of \sysname based on real-world social graphs, and the conditions under which it enhances security against Sybil attacks.
\end{enumerate}

\section{Background and Related work}

The Sybil attack was introduced by Douceur~\cite{DBLP:conf/iptps/Douceur02}, in relation to engineering peer-to-peer systems, and identified types of defences: admission control through central authentication, and resource constraints. Permissioned ledgers, such as Hyperledger~\cite{DBLP:conf/eurosys/AndroulakiBBCCC18} or Quora, take the first approach, and only allow known and designated nodes to participate in consensus. Open distributed ledgers, including Bitcoin and Ethereum follow the second paradigm. Proof of Work was first proposed by Back, as Hashcash~\cite{back2002hashcash}, to prevent Denial of Service. In the context of spam its economic efficiency was questioned by Clayton and Laurie~\cite{laurie2004proof}.

Proof-of-stake systems were proposed first in the 90s by Wei Dai, in B-money~\cite{dai1998b}. Modern proof-of-stake systems, such as Ouroboros~\cite{kiayias2017ouroboros} allow users to lock and delegate stake, and sample those users proportionately to their stake to determine an order in which blocks are produced in a blockchain system. Consensus therefore remains open, in that anyone who can buy some currency and lock it as stake can participate. However, there are serious concerns with this approach: the most fundamental one being that very wealthy parties may afford to acquire a lot of stake, and abuse it to extract value out of the system. Since stake often allows nodes to mine blocks, and reap rewards, the economics of proof-of-stake may lead to oligarchies through a ``rich get richer'' dynamic.

A number of blockchain systems consider some trust judgments between nodes and leverage them to achieve consensus. Stellar~\cite{mazieres2015stellar} considers that each node links to other nodes, and uses these direct trust judgments to form quorums in which byzantine consensus may be run --- this is the closest related system to \sysname. We use the definition and security concepts introduced in Stellar, such as a Federated Byzantine Agreement System (FBAS) and a disposable set (DSet), as a basis for our security arguments and evaluation. However, Stellar only considers direct judgments to form an FBAS, rather than the topology of the full social network.
Ripple~\cite{armknecht2015ripple} allows participants to connect to each other, but does not solve the Sybil defence problem directly, and does not achieve inter-node consensus in a strong manner. Instead, each node may run its own currency and economy, and rely on others' willingness to act as an exchange to transfer value between nodes that are not directly connected.

Besides blockchains, systems leveraging social networks --- and explicit trust judgments of users about each other --- have been proposed to combat Sybil attacks. Early work considers leveraging the `introduction graph', by which nodes get to access a Distributed Hash Table through other nodes, to ensure routing security~\cite{DBLP:conf/esorics/DanezisLKA05}. Raph Levien productized those ideas to extract reputation of developers in `Advogato'~\cite{levien2009attack}; and Sam 
Lessin~\cite{samlessin} proposed using trust graphs backed by financial commitments to infer the financial trustworthiness of users in a graph in the context of blockchains.

Academic works within this family of systems consider general social network information distributed in a peer to peer network to allow each node to determine which other nodes are genuine or Sybils. In this line of work SybilGuard~\cite{DBLP:journals/ton/YuKGF08} and SybilLimit~\cite{DBLP:journals/ton/YuGKX10} perform a distributed computation, using random walks in a network, to determine the honest regions within it. SybilInfer~\cite{DBLP:conf/ndss/DanezisM09} takes a centralized approach, and analyzes a stored social graph to identify potential Sybil regions. 

These defences make some security assumptions related to the topology of `honest' social graphs: those need to be fast mixing, have small diameter, and contain relatively few links to nodes being part of a Sybil attack. Those systems allow each node to extract a degree of belief about whether any other node in the system is a genuine participant or a Sybil. However, this belief depends on the position of the node in the social graph, and may not be exactly the same even for two honest nodes --- thus they do not directly lead to any form of consensus, not even about who is a Sybil node. Ultimately, each node uses those degrees of belief to define their own set of nodes considered honest.

Subsequent work questions a number of assumptions based on the analysis of real-world social graphs~\cite{DBLP:conf/ccs/MohaisenTHK12}. This work is influential in that it highlights that the social graphs on which these defences rest, but truly capture trust judgments, and provide incentives for users to not accept any links, including to malicious nodes. In this work we also highlight a further limit of SybilInfer as originally proposed: it is an effective mechanism to detect Sybil regions in the presence of an attack, however it is also presenting a large number of ``false positives'' when the network is free of such attacks --- by misclasifying a large number of honest nodes as Sybils. We provide a solution to this problem.

Besides `blockchain' based consensus, based on a chain of blocks and a fork choice rule, modern distributed ledgers consider and reimagine more traditional forms of byzantine consensus. An exemplary system is Tendermint~\cite{kwon2014tendermint}, that combines a quorum based byzantine consensus protocol, with a proof-of-stake mechanism. In Tenderming, and in general, decisions are made as part of the consensus protocol when over two-thirds of `stake' supports a decision --- abstracting from the actual identities of nodes and only considering their weight in stake. The advantages of this approach is low latency, quick finality, and higher throughput than Nakamoto consensus. This family of systems also includes Blockmania~\cite{danezis2018blockmania}, which separates messages materialized and exchanged in a network forming a directed acyclic graph of blocks, from the process of nodes independently interpreting it to reach consensus and order transactions. This separation is key for the practical and efficient implementation of \sysname.

\section{The \sysnamec system}

The \sysnamec system is the purest instantiation of the ideas behind \sysname. It maintains a distributed ledger, including a social network of user trust judgments about each other. In turn it leverages this information to maintain the network consistent across honest users, and also to order arbitrary transactions. Its key security property is that two nodes will accept the same sequence of transactions if they are sufficiently related in the social graph, and sufficiently separated from Sybil nodes. Those may then be used to implement any distributed computation following the well established state-machine replication paradigm~\cite{schneider1990implementing}.

\para{Security State of the Ledger} Each user maintain a local copy of the ledger, which consists of two types of information: security related information, and a sequence of application transactions. The security information relates to \sysname operations, while the transactions can be arbitrary and are never interpreted by \sysname. Specifically, the security information consists of a set of directed \emph{links} between users. A user Alice, represented by a public key, may sign a statement that she trusts a user Bob, by public key, to not be a Sybil: this becomes an arc between Alice's public key and Bob's, denoted as $pk_A \rightarrow pk_B$. Alice may also sign a statement removing such an arc -- a sequence number prevents replay attacks in either adding or removing such arcs. Signed statements adding or removing links are processed through the consensus protocol, and accepted (or not) by nodes in the network --- in a manner we will shortly examine. When accepted the security information is updated to reflect the new social graph. 

Ultimately at any moment each node has a representation of the security state of the ledger, namely a directed graph of links between public keys.

\para{Sybil Defences \& the Security State} Upon every update of the security state of the ledger a node performs an analysis of the latest social graph to determine the probability with which each node may be controlled by a Sybil attacker. Applying techniques from SybilInfer~\cite{DBLP:conf/ndss/DanezisM09} the outcome of the analysis for node $v_i$ is a map between public keys of nodes $pk_j$ and a weight $w_{ij} \in [0, 1]$ representing the probability the node is honest, represented as $pk_i \rightarrow w_{ij}$.

We note this is \emph{local judgment}: the node $v_i$ may ascribe a different weight to $pk_j$, than node $v_{i'}$, namely in general we expect $w_{ij} \neq w_{i'j}$. This is the case even if all nodes involved, namely $v_i$, $v_{i'}$, and $v_j$ are all honest. Thus the map, even between two honest nodes cannot be assumed to be the same --- and Sybil defences by themselves cannot in general achieve consensus; not even on who is a Sybil.

Despite local judgments being different in their exact details, we do not expect them to be uncorrelated. Since at their core Sybil defence mechanisms applied by honest nodes will tend to ascribe higher probability to nodes that are honest, from those nodes that are actually part of a large Sybil attack. Therefore we assume that the lists of nodes that two honest nodes will extract from the Sybil defence mechanism are going to be largely composed of honest nodes, and also likely to be overlapping.

\para{From SybilInfer weights to presumed honest sets} Our experiments with SybilInfer uncover a shortcomming of the system as originally proposed. In the presence of a Sybil attack, it is effective at detecting it --- namely setting the weights $w_{ij}$ as larger than $y = 1/2$ for honest nodes, and lower for Sybil nodes. Therefore each node in \sysname may select a set of other nodes to consider as honest according to the criterium $w_{ij} \geq y$.

However, its probabilistic model is calibrated assuming there is an attack, and in the absence of a large Sybil region it misclassifies a significant number of honest nodes as Sybils. Therefore we need to set a dynamic threshold $y$ that is sensitive to whether a cut in the graph is the result of a Sybil attack, or `natural' given a social network.

Our mechanism for calibrating the cut-off $y$ is based on the fundamental insights behind social network based Sybil defences: we consider a region of the graph as being composed of Sybils, if the volume of links to this region are comparatively low, compared with the size of the honest graph, and in particular the number of links within this honest sub-graph. We define the node set $H_y \subseteq V$ containing all nodes $v_j$ with $w_{ij} \geq y$, and $S_y \subseteq V$ with all the nodes $v_j$ such that $w_{ij} < y$. We also define a function $\mathcal{L}(N_0, N_1)$, over a set of nodes $N_0, N_1 \subseteq V$, that represent the number of unidirectional links between the node sets $N_0$ and $N_1$. We select the largest cut-off value $y \in [0.45, 0.55]$ such that $\mathcal{L}(H_y, H_y) > \mathcal{L}(N_y, S_y)$. We then use the selected value of $y$ to define for each nodes the set of honest nodes $H(v_i) = \{ v_j \, | \, w_{ij} \geq y \}$.

Intuitively this selects a cut-off $y$ that ensures that the number of links to the Sybil region is indeed small, and in particular smaller than the number of links within the honest region --- a sign of an actual Sybil attack. Large cuts in honest networks, will have a very large number of links between the two honest regions. Since in a largely honest graph the difference in weights $w_{ij}$ is due to the nodes $v_i$ proximity to some nodes, more than others, the actual number of links between the regions will be large, and we would not select such a cut as a Sybil attack. We validate this approach through experiments on real graphs.

\para{From social network Sybil defences to consensus} We have already highlighted that the Sybil defences alone do not lead to any sort of consensus between honest nodes. However, we can leverage them, and the assumptions we make about them to achieve consensus. To achieve consensus we use the definitions, safety and liveness conditions determined by the Stellar~\cite{mazieres2015stellar} protocol for Federated Byzantine Agreement Systems (FBAS). 

\begin{definition}[FBAS]
A Federated Byzantine Agreement System, or \emph{FBAS}, is a pair $\langle V, Q \rangle$ consisting of a set of nodes $V$ and a quorum function $Q\ :\ V \rightarrow 2^{2^V} \setminus \varnothing$  specifying one of more quorum slices for each node, where a node belongs to all of its own quorum slices---i.e.\ $\forall v \in V, \forall q \in Q(v), v \in q$. (Note $2^X$ denotes the powerset of $X$.) (From~\cite{mazieres2015stellar}.)
\end{definition}

\begin{definition}[Quorum]
A set of nodes $ U \subseteq V$ in FBAS $\langle V, Q \rangle$ is a \emph{quorum} iff $U \neq \varnothing$ and $U$ contains a slice for each member---i.e.\ $\forall v \in U, \exists q \in Q(v)$ such that $q \subseteq U$. (From~~\cite{mazieres2015stellar}.)
\end{definition}

We leverage \sysname to create an FBAS in the folowing manner. The set of all nodes $V$ includes all nodes $v_i$ in the system, honest and Sybils. Each honest node uses the social network Sybil defence mechanism, to define a list of nodes $H(v_i)$ that it considers honest. It does so by including in $H(v_i)$ all nodes $i$ such that $w_{ij} > y$, where $y$ is the selected cutoff value in $[0, 1]$ (see above for how to select $y$). The quorum slices for each node $Q(v_i)$ are all the subset of $H(v_i)$ of cardinality greater than $2/3 | H(v_i) | $.


We discuss in our evaluation what it means for such an FBAS system to be secure, and also provide a theory for how to efficiently test an FBAS is secure.

\section{The \sysnameh extensions}

Previous work~\cite{DBLP:conf/ccs/MohaisenTHK12} argues that `natural' social graphs do not provide the fast mixing properties necessary for reliably detecting Sybil attacks. Furthermore, research suggests that, at least some, users are easily defrauded into connecting on social network platforms to other users without much due diligence as to the identity or trustworthiness of the profile.

We extend the \sysname system, and describe \sysnameh, that combines it with aspects of proof-of-stake for two purposes: (1) as traditional proof-of-stake it caps the ability of dishonest nodes to create an infinity of identities at will; and (2) it provides incentives for honest nodes to be careful when connecting to other nodes, and potential penalties for making poor judgments. The first property keeps the number of potential Sybil nodes low, while the second one supports the key property necessary in social network Sybil defences namely that the capacitance of the graph from the honest region to the dishonest region remains small.

\para{Token system} All proof-of-stake systems require a token system, with a fixed supply, to be integrated within the security state of a ledger. Nodes may then `lock' tokens as `stake'. \sysnameh also requires such a token system, and it may be abstracted as a map from accounts (as public keys) to token values, namely $pk_i \rightarrow v_i$, maintained by all nodes. We consider those tokens are not forgeable and are fungible, as per other crypto-currencies.

\para{Links with stake} \sysnameh allows nodes to create arcs to other nodes representing judgments about their trustworthiness, as part of the security state of the system. However, unlike \sysnamec, those arcs are associated with a value in tokens. Such an arc from Alice to Bob, with value $v_{AB}$ is denoted as $pk_A \xrightarrow{v_{AB}} pk_B $. The transaction creating those arcs is signed by the originator Alice, using $pk_A$, and the value $v_{AB}$ is deducted from her account. Thus the security state of \sysnameh consists of a directed weighted graph between nodes.

\sysnameh enables not only the originators or arcs to remove them, and recuperate the value assigned to them, but also the destination of arcs. Therefore if an arc  $pk_A \xrightarrow{v_{AB}} pk_B $ exists in the system, Bob may issue a transaction signed by $pk_B$ to remove the arc and increase his balance by $v_{AB}$. We allow both sides to reclaim the value of a link in order to ensure it is a reliable signal to others of the trust between nodes. Alice, by creating a link with value $v_{AB}$ to Bob, trusts him to not immediately or eventually `steal' this value. Self-imposed \emph{vulnerability implies trust}, and signals it in a way that can be relied upon by other nodes.

This vulnerability also penalizes honest nodes that may be more likely to make poor trust decisions that degrade the overall social network Sybil defence mechanism --- assuming there is a different propensity amongst honest nodes to make poor decisions. Such nodes will pick dishonest nodes to make arcs to more often. Many of those dishonest nodes will not be part of a Sybil attack, but merely fraudsters that will simply reclaim the value. As a result bulk dishonesty, protects the system from strategic dishonesty --- honest nodes with poor judgment are likely to be disincentivized from creating links, and impoverished if they continue doing so recklessly. 

\para{Weighted social network defences} As soon as a \sysnameh node updates the security state, and in particular the weighted directed graph representing the social network, it re-runs a Sybil Detection algorithm. However, traditional social network Sybil detection algorithms, such as SybilInfer do not operate on weighted graphs --- and require some modifications to operate.

We first prune the social graph from all arcs that are not reciprocated namely all arcs connecting two nodes $pk_i \rightarrow pk_j$ for which there is no arc $pk_j \rightarrow pk_i$. We define the total value committed in remaining links by a node as $V_i = \sum_{x \in N} v_{ix}$, where $v_{ix}$ is the value assigned by $n_i$ to arcs to each node $n_x$ (by convention we consider that if an arc does not exist its value is zero).

Our goal is to then define a Markov-chain over the nodes $n_i$ with a stationary distribution $\pi(n_i) = \frac{V_i}{\sum_{n \in N} V_n}$, namely one proportional to the amount of `stake' each node has committed to reciprocal links. (SybilInfer targets instead a uniform stationary distribution).

Any distribution $g(n_i | n_j)$ that maintains `detailed balance' would ensure this property, namely $p(n_i | n_j) \cdot \pi(n_j) = p(n_j | n_i) \cdot \pi(n_i)$. However we wish to limit the transition matrix of the chain to only have support on reciprocal arcs. We therefore define a proposal distribution for node $n_i$ as $g(n_j | n_i) = v_{ij} / V_i$. This proposal is accepted (following the Metropolis-Hasting MCMC method) with probability:
\begin{align}
\alpha_{ij} &= \min \left\{1, \frac{\pi(n_j)}{\pi(n_i)} \cdot \frac{g(n_i | n_j)}{g(n_j | n_i)} \right\}    \\
       &= \min \left\{1, \frac{V_j / \sum_n V_n}{V_i / \sum_n V_n} \cdot \frac{w_{ji} / V_j}{w_{ij} / V_i} \right\} \\
       &= \min \left\{1, \frac{w_{ji}}{w_{ij}} \right\}
\end{align}

\noindent Thus the transition matrix becomes: 
\begin{align}
    p(n_j | n_i) &= \alpha_{ij} \cdot w_{ij} / V_i \\
    &= \min \left\{\frac{w_{ij}}{V_i}, \frac{w_{ji}}{V_i} \right\}
\end{align}
with the remaining probability mass being assigned to the self-transition $p(n_i | n_i)$. Interestingly inter-node transitions are only influenced by the lower value of $w_{ij}$ and $w_{ji}$,

An walk on this Markov chain results at a node drawn from the stationary distribution $\pi$, as its length tends towards infinity. However we want to leverage the properties of short random walk on such graphs: a path of length $\ell = m \cdot \log N$ should converse towards $\pi$, but would be disrupted by the capacitance between the honest and Sybil region in the graph. As a result short random walks starting at an honest node will tend to remain within the honest sub-graph. We denote the distribution of nodes reached after such a short walk starting at an honest node $n_i$ by $\pi_i^*$. Mathematically this means that for honest nodes $h \in N. \pi_i^*(n_h) \geq V_h / \sum_n V_n$ and conversely for Sybil nodes $s \in N. \pi_i^*(n_s) \leq V_s / \sum_n V_n$.

We leverage this `gap' and amplify it to penalize nodes that are more likely to be Sybils, using the Logistic function:
\begin{equation}
    \text{Logistic}(x, x_0, k) = \frac{1}{1 + e^{-k(x - x0)}}
\end{equation}
Each honest node $v_i$ assigns a probability to other nodes $v_j$ being honest, computed as:
\begin{equation}
    w_{ij} = \text{Logistic}(\pi_i^*(n_j), \frac{V_j}{\sum_n V_n}, k)
\end{equation}
The logistic term takes values in $[0, 1]$, with values closer to zero if $\pi^*_i$ undershoots the target $\pi$, and closer to 1 if it overshoots it. The term then scales the stake of the node $V_j$, allowing it to take closer to its maximal value for honest nodes; and becoming closer to zero for dishonest nodes. The parameter $k$ is a measure of how sharply deviations lead to minimum of maximum values, and can be chosen by the nodes depending on their topology and connectivity into the social graph.

\para{Defining an FBAS} As for \sysnamec, we define an FBAS $\langle V, Q \rangle$, by having each node $v_i \in V$ using the resulting weights $w_{ij}$ and considering a set of nodes $H(v_i)$ honest if those weights are at least a cutoff value $ w_{ij} > y$. That cut-off value is selected as in \sysnamec to prevent large numbers of false positives. The quorum function $Q(v_i)$ for each node $v_i$, contains all subsets of $H(v_i)$ of size greater than $2/3 |H(v_i)|$.

\section{The Security of an FBAS}

Both proposed variants of \sysname define a Federated Byzantine Agreement System, as defined in~\cite{mazieres2015stellar}. We therefore use some further definitions to achieve two security properties: (1) safety means that two honest nodes will agree to the same outcome of the consensus; and (2) liveness ensures that progress towards reaching consensus may be made despite some byzantine nodes.

\begin{definition}[Quorum Intersection]
An FBAS enjoys \emph{quorum intersection} iff any two of its quorums share a node---i.e., for all quorums $U_1$ and $U_2$, $U_1 \cap U_2 \neq \varnothing$. (From~~\cite{mazieres2015stellar})
\end{definition}
\begin{definition}[Delete]
If $\langle V, Q \rangle$ is an FBAS and $B \subseteq V$ is a set of nodes, then to delete B from $\langle V, Q \rangle$, written $\langle V, Q \rangle^B$, means to compute the modified FBAS $\langle V \setminus B, Q^B \rangle$ where $Q^B(v) = \{ q \setminus B | q \in Q(v) \}$. (From~~\cite{mazieres2015stellar})
\end{definition}
\begin{definition}[DSet]
Let $\langle V, Q \rangle$ be an FBAS and $B \subseteq V$ be a set of nodes. We say $B$ is a dispensible set, or \emph{DSet}, iff:
\begin{enumerate}
    \item (quorum intersection despite $B$) $\langle V, Q \rangle^B$ enjoys quorum intersection.
    \item (quorum availability despite $B$) Either $V \setminus B$ is a quorum in $\langle V, Q \rangle$ or $B = V$.
\end{enumerate}
(From~~\cite{mazieres2015stellar})
\end{definition}

The concept of `dispensible set' (DSet) in an FBAS is key to understanding its security. The DSet contains a set of nodes that can act adversarially, without jeopardizing the safely and liveness properties of the FBAS for the remaining (honest) nodes. The `quorum intersection despite $B$'  property ensures safety, since it requires any two quorums within a system without $B$ nodes to intersect, and thus agree on the same result. The `liveness despite $B$ property' ensures the set of honest nodes in the FBAS can form a consensus to agree on a result, even if the nodes in $B$ do not participate.

In our experiments, to establish the security of \sysname as an FBAS system, we will need to determine the necessary DSet, containing byzantine nodes, as well as potentially some honest nodes for whom Sybil defences failed. However, establishing the `quorum intersection despite B' property of a DSet not easy: naively it would require computing all quorums and testing their pairwise intersection---which is computationally unfeasible for larger number of nodes. Therefore we devise two algorithms to determine the DSet is a quorum (property 2) and also to check quorum intersection despite the DSet (property 1), efficiently. The correctness of those algorithms depends on original theorems related to an FBAS, which may also be of independent interest. 

We first prove a lemma, on which we rely for the correctness of our algorithm.
\begin{lemma}\label{quoruminquorum}
Consider a node $v \in V$ in an FBAS $\langle V, Q \rangle$, and a quorum $U$, such that $v \in U$, and a quorum slice $q \in Q(v)$ for $v$ contained in $U$, namely $q \in U$. If another node $v'$ is in the same slice, namely $v' \in q$, and the minimum cardinally of any quorum that contains $v'$ is $h$---i.e.\ for all quorum $U'$, such that $v' \in U'$, $ h \leq |U'|$. Then the minimum cardinally of $U$ is also $h$---i.e. $h \leq |U|$.
\end{lemma}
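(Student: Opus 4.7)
The plan is to show that the conclusion follows almost immediately by unpacking definitions, so the main task is to make the chain of inclusions explicit rather than to introduce any new machinery.

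First I would note the hypothesis that the quorum slice $q \in Q(v)$ is contained in $U$, i.e.\ $q \subseteq U$ (which is how I would interpret the statement ``$q \in U$'' in the lemma). Since $v' \in q$ by assumption, transitivity of set inclusion gives $v' \in U$. This is the only substantive step.

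Next I would invoke the assumption that $U$ is itself a quorum in $\langle V, Q\rangle$. Combined with $v' \in U$, this means that $U$ qualifies as ``a quorum $U'$ such that $v' \in U'$'' in the hypothesis, and therefore the lower bound $h \leq |U'|$ applies to $U' = U$. Taking $U' = U$ yields $h \leq |U|$, which is the claim.

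I expect no real obstacle here: the lemma is essentially a bookkeeping statement tying the minimum quorum size of a slice-member $v'$ back to any quorum $U$ that happens to enclose a slice of $v$ containing $v'$. The only place where care is needed is to notice that no assumption on $v$ itself (such as a lower bound on its own quorums) is actually used: the argument routes entirely through $v'$, because $v$'s quorum $U$ automatically becomes a witness quorum for $v'$ as soon as $q \subseteq U$ and $v' \in q$. If the statement were instead asserting a bound on the \emph{minimum} quorum size over all quorums containing $v$, I would need to additionally argue that any such minimum-cardinality quorum must contain a slice of $v$, and then pick a slice and repeat the argument; but as stated, the single inclusion chain $v' \in q \subseteq U$ suffices.
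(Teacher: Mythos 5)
Your proposal is correct and follows essentially the same route as the paper's own proof: both observe that $q \subseteq U$ forces $v' \in U$, so $U$ is itself a quorum containing $v'$ and the lower bound $h$ applies to it directly. The paper phrases this as $\{v, v'\} \subseteq U$ and calls it a "simple symmetry," but the substance is identical to your inclusion-chain argument.
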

\begin{proof}
The theorem seems complex, but really is the result of a simple symmetry: since the quorum slice $q$ is contained in $U$, both $v \in q$ and $v' \in q$ are within the quorum, $\{v, v'\} \subseteq U$. Since $h$ is the minimal cardinality of a quorum containing $v'$, and the quorum $U$ also contains $v'$, it trivially follows that $h \leq U$. 
\end{proof}

\begin{definition}[Quorum Slice Cardinality Map]
The function $C(v)$ is the \emph{quorum slice cardinality Map} for an an FBAS $\langle V, Q \rangle$. For each node $v \in V$ in , it returns the cardinality of the smallest quorum slice in $Q(v)$---i.e. $C(v) = \min\{|q|\, | \, q \in Q(v)  \}$.
\end{definition}

\begin{theorem}[Trivial Intersection]
 For an FBAS, if all values of the quorum slice cardinality map are larger than half the number of nodes, it enjoys quorum intersection---i.e. $\forall v \in V, C(v) > |V| / 2$.
\end{theorem}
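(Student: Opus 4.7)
The plan is a direct pigeonhole argument: show that every quorum must have more than $|V|/2$ elements, then observe that any two subsets of $V$ each of cardinality strictly greater than $|V|/2$ must share at least one node.

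First I would fix an arbitrary quorum $U \subseteq V$. By definition of a quorum, $U \neq \varnothing$, so I can pick some $v \in U$, and there must exist a slice $q \in Q(v)$ with $q \subseteq U$. Then $|U| \geq |q|$, and since $C(v)$ is defined as the minimum cardinality over all slices in $Q(v)$, we have $|q| \geq C(v)$. By hypothesis $C(v) > |V|/2$, so $|U| > |V|/2$. (This is essentially a one-line specialization of Lemma~\ref{quoruminquorum}, taking $v' = v$ and $h = C(v)$; one can cite the lemma directly rather than re-deriving the inequality.)

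Next I would apply this bound to two arbitrary quorums $U_1, U_2$. Both satisfy $|U_i| > |V|/2$, and both are subsets of $V$. If they were disjoint, then $|U_1 \cup U_2| = |U_1| + |U_2| > |V|$, contradicting $U_1 \cup U_2 \subseteq V$. Hence $U_1 \cap U_2 \neq \varnothing$, and quorum intersection holds by definition.

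There is no serious obstacle here; the argument is a textbook pigeonhole once the lower bound on quorum size is established. The only care needed is to ensure each quorum is non-empty (guaranteed by the definition of quorum) so that the slice-extraction step is valid, and to be explicit that the same lower bound applies uniformly to every quorum because the hypothesis $C(v) > |V|/2$ is universally quantified over $v \in V$.
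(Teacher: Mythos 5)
Your proof is correct and follows essentially the same pigeonhole argument as the paper: extract a slice $q \subseteq U$ for a member of each quorum, use $|q| \geq C(v) > |V|/2$, and conclude that two sets each larger than half of $V$ must meet. The only cosmetic difference is that you first lower-bound the quorums' sizes and intersect the quorums, while the paper intersects the two slices directly and notes they sit inside the respective quorums; the substance is identical.
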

\begin{proof}
Consider two nodes and quorums in the FBAS, $v_1 \in U_1$ and $v_2 \in U_2$. By the definition of quorums there must exist two slices $q_1 \in Q(v_1), q_2 \in Q(v_2)$ of $v_1, v_2$ respectively such that $q_1 \subseteq U_1$ and $q_2 \subseteq U_2$. Since $|q_1| > |V| / 2$ and $|q_2| > |V| / 2$, it must be that they intersect at least in one element. And $q_1 \cup q_2 \neq \varnothing \Leftrightarrow q_1 \cap U_1 \cap q_2 \cap U_2 \neq \varnothing \Rightarrow U_1 \cap U2 \neq \varnothing$.
\end{proof}

This first theorem provides a trivial way to check a FBAS for quorum intersection: if all quorum slices, for all nodes contain more than half the nodes, then all quorums will intersect. However, this condition is much stronger than necessary for quorum intersection, and in practice not always achievable. However, it is the property on which the security of traditional BFT systems can be proven in when those are encoded as FBAS. However, we will seek a weaker property that still implies quorum intersection.


\begin{lemma}[Minimum Quorum Cardinality in FBAS]
Consider the  FBAS $\langle V, Q \rangle$. We define a function $F_i(v)$ providing a lower bound on the cardinality of any quorum $U$ containing $v$, namely $\forall U,i$ such that $v \in U$ it holds that $F_i(v) \leq |U|$. We initialize $F$ as $F_0(v) = C(v)$, where $C$ is the quorum slice cardinality map. We also define the sets $\bar{q}_v \subseteq$ for each $v \in V$, containing all nodes in $v$'s quorum slices---i.e.\ $\bar{q}_v = \bigcup_{q \in Q(v)} q$.

Define as $S$ the sequence of values $[F_i(v')\, |\, v' \in \bar{q}_v]$ in ascending order, and $S_i[C(v)]$ is its $C(v)^{th}$ element. If we assign  $F_{i+1}(v) \leftarrow \max\{S_i[C(v)],\, F_{i}(v)\}$, the value $F_{i+1}(v)$ is also a lower bound on the cardinally of any quorum containing $v$. 
\end{lemma}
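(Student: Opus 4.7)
My plan is to prove the statement by induction on $i$, maintaining the invariant that $F_i(v')$ is a lower bound on $|U'|$ for every node $v'$ and every quorum $U'$ containing $v'$. The base case $i = 0$ is immediate from the Definition of a quorum: any quorum $U \ni v$ must contain some slice $q \in Q(v)$ with $q \subseteq U$, hence $|U| \geq |q| \geq C(v) = F_0(v)$.

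For the inductive step I fix an arbitrary quorum $U$ with $v \in U$ and aim to show $F_{i+1}(v) \leq |U|$. Because $F_{i+1}(v) = \max\{S_i[C(v)], F_i(v)\}$ and $F_i(v) \leq |U|$ by the inductive hypothesis applied to $v$ itself, it is enough to bound $S_i[C(v)]$. The key step is to exhibit $C(v)$ nodes in $\bar{q}_v$ that are simultaneously inside $U$: since $U$ is a quorum containing $v$, some slice $q \in Q(v)$ satisfies $q \subseteq U$, and $|q| \geq C(v)$ by definition of $C$. Each $v' \in q$ then lies in $U$, so $U$ is itself a quorum containing $v'$, and the inductive hypothesis gives $F_i(v') \leq |U|$. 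Since $q \subseteq \bar{q}_v$, the multiset $\{F_i(v') : v' \in \bar{q}_v\}$ contains at least $C(v)$ values bounded above by $|U|$. An order-statistic argument then yields that its $C(v)$-th smallest element, which by definition is $S_i[C(v)]$, is also at most $|U|$, and we conclude $F_{i+1}(v) \leq |U|$.

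I expect the only subtle point to be this last order-statistic step. The set $\bar{q}_v$ may contain many nodes outside $U$, coming from other slices of $v$, whose $F_i$ values are not usefully controlled by the hypothesis, so one must resist the temptation to reason about the whole sorted sequence and instead track only the bottom $C(v)$ ranked values. Once this is framed correctly the proof goes through cleanly: the slice $q \subseteq U$ simultaneously supplies the size bound $|U| \geq |q| \geq C(v)$ and the $C(v)$ ``good'' nodes whose $F_i$ values pin down $S_i[C(v)]$.
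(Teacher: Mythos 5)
Your proof is correct and follows essentially the same route as the paper's: both arguments observe that a quorum $U \ni v$ must contain a slice $q \in Q(v)$ with $|q| \geq C(v)$, that each $v' \in q \subseteq U$ makes $U$ a quorum containing $v'$ (the content of the paper's Lemma~\ref{quoruminquorum}, which you re-derive inline from the inductive hypothesis), and hence that at least $C(v)$ of the values $F_i(v')$ over $\bar{q}_v$ are bounded by $|U|$, so the $C(v)$-th order statistic $S_i[C(v)]$ is too. Your version is merely more explicit about the induction on $i$ and the order-statistic step, which the paper states somewhat tersely.
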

\begin{proof}
By the definition of the quorum slice cardinality map $C$ it is trivial to argue that $F_0(v) = C(v)$ is a lower bound on the cardinality of all quorums including $v$, since they each need to fully contain at least one quorum slice from $v$. 
We need to show that the value $F_i(v)$ and therefor $S_i[C(v)]$ is a lower bound on the cardinality of any quorum $U$ containing $v$. Since the minimum quorum slice size of $v$ is $C(v)$ it must contain at least that number of nodes, out of the set $\bar{q}_v$. By lemma~\ref{quoruminquorum} we know that including a node $v'$ from $\bar{q}$ into a quorum $U$, would yield a quorum of cardinality at least $F_i(v')$. Since $C[v]$ such nodes from $\bar{q}_v$ must be included the minimum cardinality of $U$ is $S_i[C(v)]$, since $S_i$ is defines as the ordered sequence of minimum cardinality sizes for quorums each node in $\bar{q}_v$. 
\end{proof}
We use this lemma in building an algorithm that computes lower bounds on the cardinalities quorums of all nodes in the FBAS iterativelly. It starts with an estimation equal to the cardinality of the smallest quorum slice for each node (the function L(v)), and then increases the estimate as $F_{i+1}(v) \leftarrow \max( F_i(v),\, S_i[L(v)] )$, where $S_i = [F_i(v') \, | \, v' \in \bar{q}_v]$.

Based on the above we can efficiently estimate minimum bounds on the cardinality of all quorums in the \sysname FBAS. Then we can test them to show quorum intersection:
\begin{theorem}[Quorum Intersection due to Minimum Size]
If within a FBAS, all quorums $U$ have cardinality $|U| > |V| / 2$, it enjoys quorum intersection.
\end{theorem}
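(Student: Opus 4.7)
The plan is a direct pigeonhole argument, mirroring in structure the earlier \emph{Trivial Intersection} theorem but applied at the level of quorums rather than slices. I would start by picking two arbitrary quorums $U_1, U_2$ in the FBAS and showing that the hypothesis $|U_1|, |U_2| > |V|/2$ forces them to share at least one node.

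The main step is a counting argument via inclusion-exclusion: since $U_1, U_2 \subseteq V$, we have $|U_1 \cup U_2| \leq |V|$, hence
\begin{equation*}
|U_1 \cap U_2| \;=\; |U_1| + |U_2| - |U_1 \cup U_2| \;>\; \tfrac{|V|}{2} + \tfrac{|V|}{2} - |V| \;=\; 0.
\end{equation*}
Therefore $U_1 \cap U_2 \neq \varnothing$. Since $U_1, U_2$ were arbitrary quorums, the FBAS satisfies the quorum intersection property from the definition.

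There is essentially no obstacle here: unlike the earlier \emph{Trivial Intersection} theorem, which had to invoke the fact that each quorum fully contains some slice of each of its members in order to transfer a slice-level size bound to a quorum-level intersection claim, this theorem assumes the size bound at the quorum level directly, so no appeal to quorum slices or to Lemma~\ref{quoruminquorum} is needed. The only thing worth being careful about is handling the edge case $U_1 = U_2$, but since each quorum is non-empty by definition the intersection is non-empty trivially in that case, so no separate case analysis is actually required. The value of the theorem, of course, lies in combining it with the preceding lemma on lower bounds $F_i(v)$: one iteratively computes the per-node lower bound on quorum cardinality, and if the minimum such bound across $V$ exceeds $|V|/2$, this theorem immediately certifies quorum intersection without enumerating quorums.
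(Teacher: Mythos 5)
Your proof is correct and takes essentially the same route as the paper, which simply observes that two subsets of $V$ each of cardinality greater than $|V|/2$ must intersect; you merely spell out the inclusion-exclusion count that the paper leaves as "trivial." No gap, and your remarks about not needing slices or Lemma~\ref{quoruminquorum} match the paper's treatment.
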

\begin{proof}
Trivially, if we have two sets $U_1 \subseteq V$ and $U_2 \subseteq V$ with a number of elements greater than half the number of elements in $V$, they must intersect in at least one element. 
\end{proof}

We note that Quorum Intersection due to Minimum Size is a sufficient condition to guarantee a FBAS enjoys quorum intersection, but it is too strong to be necessary. For example an FBAS with a dictator node $v_0$ present in all quorums, will satisfy trivially quorum intersection, without the need for all quorums to be of a certain size.

%

\para{Computing DSets and FBAS safety} We leverage the theorems above to test the concrete FBAS extracted from \sysname for safety and liveness. We define a set of nodes $V$, each with a list $H(v) \subseteq V$ of nodes they consider honest, and a set of malicious nodes $B$. The quorum function $Q(v)$ contains all subsets of $H(v)$ of size over $2/3 |H(v)|$. 

First we execute a procedure \textsc{DetermineDSet} using the initial bad nodes $B$ to determine a set $B'$ of nodes that cannot reach agreement, due to having accepted too many bad nodes as honest. Following the terminology from Stellar we call the set of nodes $B' \setminus B$ befouled nodes. A node is befouled if it has accepted in its set $H(v)$ more than a third of bad or befouled nodes. By definition the nodes $V \setminus B'$ still constitute a quorum. (Thus proving liveness despite $B'$).

Once we have identified the set of bad and befouled nodes $B'$, we define the FBAS $\langle V, Q \rangle^{B'}$ and try to establish whether it enjoys quorum intersection. We use algorithm \textsc{DetermineSafety} to test for quorum intersection: we iteratively determine an increasingly better lower bound $F_i(v)$ on the quorum cardinality of each node. Once the bound converges, we check that each $F_i(v) > |V \setminus B'| / 2$, which according to our theorems ensures quorum intersection (Thus proving quorum intersection despite $B'$). If this condition is true we label our FBAS as safe, since the set $B'$ is a DSet.

\begin{figure*}[t]
\begin{subfigure}[t]{0.50\textwidth}
\begin{algorithmic}
\Function{DetermineDSet}{$\langle V, Q \rangle, B$}
    \State $H(v) \leftarrow \bigcup_{q \in Q(v)} q$ 
    \State exit $\leftarrow$ False
    \While{\textbf{not} exit}
        \State exit $\leftarrow$ True
        \ForAll{ $V \setminus B$ }
        \If{$|H(v) \setminus B| > 2 \cdot | H(v) \cap B |$}
            \State $B \leftarrow B \cup \{ v \}$
            \State exit $\leftarrow$ False
        \EndIf
        \EndFor
    \EndWhile
    \State \Return $B$
\EndFunction
\end{algorithmic}
\end{subfigure}
\begin{subfigure}[t]{0.50\textwidth}
\begin{algorithmic}
\Function{DetermineSafety}{$\langle V, Q \rangle$}
\State $H(v) \leftarrow \bigcup_{q \in Q(v)} q$ 
\State $C(v) \leftarrow \min \{ |q| \text{ for } q \in Q(v)  \}$
\State $i \leftarrow 0$
\State $F_i(v) \leftarrow C(v)$
 \State exit $\leftarrow$ False
    \While{\textbf{not} exit}
        \State exit $\leftarrow$ True
        \ForAll{ $V \setminus B$ }
            \State $S \leftarrow \text{sorted}([F_i(v') \text{ for } v' \in H(v)])$
            \State $F_{i+1}(v) \leftarrow \max\{ F_i(v),\, S_i[C(v)] \}$
            \If{ $ F_{i+1}(v) \neq F_{i}(v)$}
            \State exit $\leftarrow$ False
            \EndIf
        \EndFor
        \State $i \leftarrow i + 1$
    \EndWhile
    \State \Return $\forall v \in V. F_{i}(v) > |V| / 2$
\EndFunction
\end{algorithmic}
\end{subfigure}
\caption{Algorithms to determine safe set and quorum intersection}
\end{figure*}


\section{Experimental Evaluation}

\para{Datasets \& Pre-processing} An evaluation of \sysname necessitates the use of real-world datasets of social connections, since the mechanism relies on the dynamics of connections within `real' social networks. There are methodological challenges to doing this. First, there does not exist a network embodying the proposed mechanism of establishing links backed by mutual token on links as necessary by \sysnameh. Second, existing datasets are based on networks for casual socializing, in which incentives are not aligned for careful selection of links, but rather provide advantages and incentives for users to be prosmicuous in their connections. Those issues present threats to validity.

For our evaluation we chose to use the \emph{pokec} network dataset, that is open and available on the Stanford large  network dataset collection\footnote{\url{https://snap.stanford.edu/data/soc-Pokec.html}}. This is a snapshop of the largest social network provider in Slovakia, collected in 2012. It contains 1632803 nodes and 30622564 edges. 

We pre-process this network in two ways, to produce evaluation datasets: (1) We sub-sample 200000 nodes from the network, and create a subgraph with all their edges (including those to nodes not in the set of nodes); (2) We then recursively prune the network to a core of nodes with degree at least 3. Pruning is performed by removing nodes with degree less than 3, until all nodes have a higher degree. 

These operations result in sub-graphs of size about 10000 nodes, which is comparable to the number of miners in systems such as Bitcoin and Ethereum. We extract the degree three core of the network as a proxy for nodes that have strong connections to each other, excluding nodes with weaker trust connections between them. (Note such pre-processing can also done as part of a production \sysname pipeline.)

\para{Sybil Attack Simulation} To evaluate \sysname we simulate Sybil attacks on the graph datasets, in a what that is most generic. We parametrize the attack through a number of parameters: (1) the number of Sybil nodes ($n_s$); (2) the number of links or amount of stake on links purely in the Sybil region ($l_s$); (3) the amount of stake on links between the honest region and Sybil region ($l_n$); (4) the fraction of honest nodes that are naive, and connect to Sybil nodes ($f_n$).

Given those parameters, we instanciate a set of Sybil nodes of size $n_s$, and establish $l_s$ mutual links between them at random. We sample at random a set of honest nodes to be `naive', as a fraction $f_n$ of the honest nodes. We then create mutual connections between random Sybil nodes in that set, and honest nodes from the naive set, according to the budget of links or stake available ($l_n$). 

SybilInfer~\cite{DBLP:conf/ndss/DanezisM09} provides an argument that the exact composition of the Sybil region does not impact security, but what matters is rather the relative size of the Sybil region and the links between honest and Sybil regions. However, we there might be optimizations in connecting Sybils in specific ways to the naive nodes, which we have not explored. This is a further threat to the validity of our results. However, our methodology is in line with previous works.

\para{Purely Benign or Byzantine Conditions} We first evaluate the \sysname mechanism in a network composed of overwhelmingly benign nodes. We instanciate such a network by only attaching a single Sybil ($n_ = 1$), no stake in the Sybil region ($l_s = 0$), and minimal stake between the honest and dishonest nodes ($l_n = 2$). On the other hand we allow this single Sybil nodes to connect to any honest node ($f_n = 1.0$). We primarily use this condition to ensure that \sysname does not suffer from false positives in detecting Sybils, that compromise agreement, as the raw SybilInfer mechanism does.

We also evaluate \sysname under conditions of byzantine attacks that can be accommodated within the traditional Byzantine fault tolerance paradigm, with just a standard proof-of-stake system: where the number of Sybils $n_s$ is $1/3$ of the size of the honest nodes, and the stake of all links connected to the Sybil nodes is at most $1/2$ of the total stake in the honest region. We also allow Sybil nodes to connect to any honest node ($f_n = 1.0$). This condition simulates an attack that can be tolerated even if all Sybil nodes are accepted as honest by all -- but we need to assess whether this is the case in \sysname and whether the resulting FBAS is secure.

In both conditions \sysname honest nodes reach safe agreement. In the benign condition the cut-off is determined as $y = 0.49$, and all nodes are accepted by all other honest nodes as honest. This includes the single Sybil node. All quorums are larger than half the number of honest nodes, and global agreement is reached. In the byzantine condition agreement is also reached. The cut-off value is set automatically as $y = 0.50$. And since all honest nodes have fewer than 1/3 links to Sybil nodes (none is confused) they reach agreement. (Those are the results of 10 repeats of the experiments, for different configurations of the Sybil attack). 

Those results confirm that \sysname performs no worse than not using \sysname --- which is not a given: (1) when there is no Sybil attack it does not impede agreement through false positives; and (2) when there are fewer than $1/3$ dishonest nodes, the FBAS is secure and preserves agreement and liveness without any negative interference from \sysname.

\para{Future Evaluation} The analysis above, concerning benign and byzantine settings will be extended in the full paper to establish the security of \sysname for:
\begin{itemize}
\item{Variable Number of Adversarial Nodes}
\item{Variable Size of Adversarial Stake}
\item{Variable Numbers of Confused Honest Nodes}
\item{Variable Size of Honest-Sybil Links}
\end{itemize}

\section{Conclusions}

In this work we show that we can leverage social networks to protect traditional Proof-of-Stake systems against a wider range of attacks, from nodes with a lot of stake, than previously expected. To do so it is necessary to abstract their consensus mechanisms within the more general FBAS framework, and also devise efficient tests for whether such an FBAS is secure to support experimental evaluations. The degree to which this mechanism is effective is subject to extended evaluation, which will be ready for the Stanford Blockchain Conference.


{\small \bibliographystyle{alpha}
\bibliography{references}}

\end{document}